
\documentclass[12pt]{article}



\usepackage{algorithm,algpseudocode}
\usepackage{amscd,amsfonts,amsopn,amssymb,amstext}
\usepackage{appendix}
\usepackage{booktabs}
\usepackage{bbm,bm}
\usepackage[centertags]{amsmath}
\usepackage{color}
\usepackage{fullpage}
\usepackage{graphicx,graphics,psfrag}
\usepackage{indentfirst}
\usepackage{latexsym,enumerate}
\usepackage{lscape}
\usepackage{multirow}
\usepackage{natbib}
\usepackage{rotating}
\usepackage{setspace}
\usepackage{srcltx}
\usepackage{subfigure}
\usepackage[T1]{fontenc}
\usepackage{threeparttable}
\usepackage{times}
\usepackage{url}
\usepackage{verbatim}

\makeatletter
\renewcommand\@biblabel[1]{#1.}
\makeatother

\newtheorem{proposition}{\vspace{-0.05cm}\newline{Proposition}}

\newtheorem{example}{\vspace{-0.2cm}\newline{Example}}



\newtheorem{definition}{Definition}

\newenvironment{proof}{{\bf Proof:\ \ }}{\qed}
\newcommand{\qed}{\rule{0.5em}{1.5ex}}







\title{\Large \bf{Modeling citation concentration through a mixture of Leimkuhler curves}}

\author{\normalsize
{\bf Emilio G\'omez--D\'eniz, Pablo Dorta--Gonz\'alez\thanks{P. Dorta--Gonz\'alez, Department of Quantitative Methods in Economics,
University of Las Palmas de Gran Canaria, 35017 Las Palmas de Gran Canaria, Spain. E-mail: \texttt{pablo.dorta@ulpgc.es}} },\\
{\small Department of Quantitative Methods in Economics and TIDES Institute,}\\[-0.2cm]
{\small University of Las Palmas de Gran Canaria, Spain}\\[-0.15cm]
{\small http://orcid.org/0000-0002-5072-7908 (EGD) }\\[-0.20cm]
{\small http://orcid.org/0000-0003-0494-2903(PDG)}
}

\date{}


\def \E{{\rm I\kern -2.2pt  E}}

\begin{document}

\maketitle

\vspace{-0.5cm}

\begin{abstract}\noindent
When a graphical representation of the cumulative percentage of total citations to articles, ordered from most cited to least cited, is plotted against the cumulative percentage of articles, we obtain a Leimkuhler curve. In this study, we noticed that standard Leimkuhler functions may not be sufficient to provide accurate fits to various empirical informetrics data. Therefore, we introduce a new approach to Leimkuhler curves by fitting a known probability density function to the initial Leimkuhler curve, taking into account the presence of a heterogeneity factor. As a significant contribution to the existing literature, we introduce a pair of mixture distributions (called PG and PIG) to bibliometrics. In addition, we present closed-form expressions for Leimkuhler curves. {Some measures of citation concentration are examined empirically for the basic models (based on the Power {and Pareto distributions}) and the mixed models derived from {these}.} An application to two sources of informetric data was conducted to see how the mixing models outperform the standard basic models. The different models were fitted using non-linear least squares estimation.

\vspace{-0.5cm}
\paragraph{Keywords:} {Leimkuhler curve; Power distribution; Mixture; concentration measurement; inequality measurement.}

\paragraph{JEL Classification:} C80, D30.
\paragraph{Mathematics Subject Classification (2020):} 62P25.
\end{abstract}

\paragraph{Funding:} Emilio G\'omez-D\'eniz was partially funded by grant PID2021-127989OB-I00 (Ministerio de
Econom\'ia y Competitividad, Spain).

{\paragraph{Acknowledgements}
We thank the two anonymous reviewers and the Associated Editor for their valuable comments and suggestions, which have greatly helped us improve the original manuscript.}

\paragraph{Highlights: }
\begin{itemize}
\setlength\itemsep{0.001em}
\item	Introduction of mixture distributions (called PG and PIG) to bibliometrics.
\item	Closed-form expressions for Leimkuhler curves.
\item	Calculations with mixture distributions.
\item	Theoretical and mathematical results in bibliometrics/informetrics.
\end{itemize}

\section{Introduction}

Power law relationships or similar long-tailed distributions are pervasive in various domains of informetrics, such as authors and their publications (\citealp{lotka_1926}), topics and journals related to them \citetext{\citealp{bradford_1934}}, web sites and their links \citetext{\citealp{faloutsos_1999}}, or research literature and its obsolescence (\citealp{dortaandgomez_2022}). Furthermore, these types of distributions are widely observed in diverse fields, including demography (cities and their inhabitants), linguistics (words and their uses), economics (incomes in a market economy), ecology (forest fragmentation), astronomy (initial mass functions), and more \citetext{\citealp{newman_2005}; \citealp{pareto_1895}; \citealp{salpeter_1955}; \citealp{saravia_2018}; \citealp{zipf_1941}}. The ubiquity of power laws is evident across different domains, showcasing their widespread presence in various disciplines.

Power laws are mathematical relationships that describe a phenomenon in which a small number of elements have a disproportionately high impact or frequency compared to the majority. In the context of citation networks, power laws suggest that a few highly cited papers have a much greater impact than the majority of less cited papers. This phenomenon is known as preferential attachment or cumulative advantage (\citealp{brzezinski_2015}). Some other papers on the application of power laws to citations are those of \cite{thelwallandwilson_2014}, \cite{thelwall_2016a}, and \cite{thelwall_2016b}. The distribution of citations among scientific papers often follows a power law distribution, which is characterized by a straight line on a log-log plot. This means that while most papers receive few citations, a small fraction receive an exceptionally large number of citations and contribute significantly to the overall impact of the field. This pattern is similar to the well-known "80/20" rule, where roughly 80\% of citations are contributed by 20\% of papers.

Lorenz and Leimkuhler curves have proven to be useful tools in bibliometrics for analyzing the distribution and growth of scientific output. These curves can provide insights into the concentration of research output, the distribution of citations, and the factors influencing the growth rate of scientific output. For example, a study by \cite{abramoetal_2011} used Lorenz curves to analyze the concentration of research output among Italian universities. {\cite{egghe_2006}} examined the distribution of citations among scientific papers. He found that the distribution was highly skewed, with a few papers receiving a disproportionate number of citations. More recently, \cite{nairandvineshkumar_2022} derived novel properties of Leimkuhler curves by employing quantile functions and applied these findings to the informetrics domain.

These distributions share the common characteristic of having a few sources with a high concentration of items, leading to the study of inequality or concentration as a major focus of the field (\citealp{rousseau_2018}). The social implications of this are significant, prompting research into techniques for quantifying and modeling inequality. One such technique is Leimkuhler ordering, which provides a partial ordering of non-negative random variables based on their concentration properties. This allows the comparison of distributions and the identification of more concentrated ones.

An area of interest within the field of informetrics, particularly in bibliometrics, revolves around the challenge of measuring journal productivity through citation analysis to identify a core of influential journals that are at the forefront of research in a particular field. This issue concerns the distribution pattern of citations among different journals over a given period of time and how to rank journals based on the decreasing number of citations received. Research on this topic has received considerable attention in recent decades. For a detailed study of this problem, see for example \cite{hubert_1977}. If a graphical representation of the cumulative percentage of the total sample of cited papers, ordered from largest to smallest, is plotted against the cumulative percentage of journals, we get a Leimkuhler curve, or double cumulative curve (the curve starts at the largest value) of a productivity distribution, as it is known in the informetric literature. It is well known that this tool is related to the graph of the Lorenz curve used in economics to model income distribution \citetext{see \citealp{burrell_2005}; {\citealp{sarabia_2008a,sarabia_2008b}}; and \citealp{sarabiaetal_2010}, among others}.

In this research, our findings suggest that conventional Leimkuhler functions may not adequately capture the complexities of diverse empirical informetrics data. To address this limitation, we propose a novel approach to Leimkuhler curves that involves fitting a known probability density function to the original Leimkuhler curve, while accounting for the presence of a heterogeneity factor. We then analyze various citation concentration measurements using both the traditional basic models {(based on Power and Pareto distributions)} and the mixed models derived from them. To assess the effectiveness of our approach, we apply it to two sources of informetric data and compare the performance of the mixed models with the standard basic models.

The structure of this paper is as follows. Section \ref{s2} deals with the standard definition of a Leimkhuler curve and provides two standard Leimkhuler curves, the one based {on the power distribution and the one based on the Pareto distribution}. Two numerical examples are provided to show how these two basic models work. The mixture model of Leimkhuler curves is shown in section \ref{s3}. Here we also give the expression for the Gini and Pietra indices, two measures of citation concentration. Numerical applications are given in section \ref{s4} and conclusions in the last section \ref{s5}.

\section{The standard Leimkuhler curve\label{s2}}
The Leimkuhler curve is a crucial instrument in information processing, especially in informetrics \citetext{see for instance, \citealp{burrell_2005}}. This curve is closely associated with the Lorenz curve, differing only in that the contributions are arranged in a descending order along the Leimkuhler curve. This curve plots the cumulative proportion of total citations against the cumulative proportion of sources. Specifically, given a set of $n$ ordered numbers $x_1\geq x_2\geq\dots\geq x_n\geq 0$, the Leimkuhler curve $K$ obtained from them is defined as the points $i/n$, $i=0,1,\dots,n$ by $K(0)=0$ and $K(i/n)=s_i/s_n$, where $s_i=\sum_{j=1}^{i} x_j$. Therefore, $K(i/n)$ represents the fraction of the total $s_n$ contributed by the $i$th largest numbers. Thus, by the fraction $i/n$ of the largest numbers, the Leimkuhler curve is defined for all $u\in[0,1]$ by linear interpolation between adjacent point $(i/n,K(i/n))$, thus producing a polygonal or broken-line graph in rectangular coordinates from the origin to the end $(1,1)$.

The following formal definition of a Leimkuhler curve is due to \cite{sarabia_2008a}.
\begin{definition}\normalfont Let $X\in{\cal X}$ a non-negative random variable with cumulative distribution function $F_X(x)=\Pr(X\leq x)$ and inverse distribution function $F_X^{-1}(z)=\inf\{x: F_X(x)\geq z\}$. The Leimkuhler curve $K_X(u)$, $0\leq u\leq 1$, corresponding to $X$ is defined by
\begin{eqnarray}
K_X(u)=\frac{1}{\mu_X}\int_{1-u}^1 F_X^{-1}(y)\,dy,\quad 0\leq u\leq 1,\label{lc}
\end{eqnarray}
where $\mu_X=\int_{\cal X} x\,dF_X(x)<\infty$, the mathematical expectation of $X$, is assumed to be finite.
\end{definition}

{The definition above allows for the continuous and discrete case. Suppose that $K_X(u)$ {given in \eqref{lc}} has first and second derivatives and that $F_X$ has a continuous nonzero first derivative. Then $K_X(u)$ satisfies the following properties \citetext{see for instance \citealp{sarabia_2008a}}:
\begin{enumerate}[$(i)$]
\item $K_X(0)=0$, $K_X(1)=1$.
\item $K_X(u)$ is a non-decreasing function.
\item $K_X(u)$ is a concave function.
\end{enumerate}}

Other alternatives definitions of a Leimkuhler curve can be seen in \cite{sarabia_2008a}. Furthermore, taking into account that Leimkuhler and Lorenz curves (say $L_X(u)$) are related via
\begin{eqnarray}
K_X(u)=1-L_X(1-u)\label{relation}
\end{eqnarray}
the proof of the Theorem above can be carried out by taking into account the characterization of a Lorenz curve given, among others, in Theorem 1 in \cite{sarabiaetal_2010} and Theorem 2 in \cite{sordoetal_2014}.



The counterpart of the Gini index \citetext{see \citealp{burrell_1991, burrell1992}} defined for Lorenz curves to the Leimkuhler curve is given by
\begin{eqnarray}
G(\theta)=2\int_{0}^{1}K_X(u;\theta)\,du-1.
\label{gib}
\end{eqnarray}

This index takes values in the unit interval and measures the subject category's egalitarianism. The lower the index, the less egalitarian the subject category; the higher the index, the more egalitarian the subject category.

The authors \cite{kakwani1980} and \cite{yitzhaki_1983} proposed a generalization of the Gini index that applies to Lorenz curves. Their proposed measure, which has been adapted for the Leimkuhler curve, is known as the generalized Gini index. It is defined by the equation

\begin{eqnarray}
G_r(\theta)=r(r+1)\int_0^1(1-u)^{r-1}K_X(u;\theta)\,du-1,\quad r >0.\label{yi}
\end{eqnarray}

The parameter $r$ plays a crucial role in determining the sensitivity of the generalized Gini index to changes in the income distribution. Specifically, when $r\to 0$, changes at the upper end of the distribution have the only impact on the value of the index and this index does not distinguish among Leimkuhler curves. Conversely, as $r$ increases towards infinity, the index becomes increasingly sensitive to changes at the lower end of the distribution. In the limit, the value of the index is determined exclusively by the lowest income, reflecting the notion that the overall social welfare is contingent solely on the well-being of the poorest member of society. It is worth noting that the standard Gini index, as represented by equation \eqref{gib}, is retrieved when $r$ is set to 1.


On the other hand, the Pietra index \citetext{\citealp{eliazarandsokolov_2010}}, defined as
\begin{eqnarray}
P(\theta)=\max_{0\leq u \leq 1}|K_X(u;\theta)-u|,\label{pi}
\end{eqnarray}
i.e. the distance between the Leimkuhler curve and the line of perfect equality.

{
Although, in practice, the citation distributions are integer-valued, despite the variable of interest in the paper being discrete, the study is limited to considering distributional modeling with continuous approximations.}

This work does not intend to keep the distribution of the number of citations. In this sense, classical discrete distributions, such as the Poisson and negative binomial distributions, are the appropriate modeling for this. We will follow the spirit of Leimkuhler's work \citetext{see \citealp{leimkuhler_1967}} in which the curve named after him will be used to adjust the fraction of documents, $x$, that are most productive, i.e., $0<x<1$, while $F_X(x)$ denotes the proportion of total productivity contained in that fraction. It will be shown that the Leimkuhler curve constructed from (1), regardless of the distribution function from which it is based and its domain or support is a valuable tool for the intended purpose. Later, this task can be improved by enriching the Leimkuhler curve on which it is based.

\begin{example}\normalfont Let $Z$ be a power distribution with cumulative distribution function given by $F_Z(z)=z^{1/\theta}$, $z\in(0,1)$, $\theta>0$ and mean $\mu_Z=(1+\theta)^{-1}$. Now, from \eqref{lc} we get the corresponding Leimkuhler curve given by {\citetext{this model has been considered by \citealp{sarabia_2008b}}}
\begin{eqnarray}
K_X(u;\theta)=1-\bar u^{1+\theta},\label{plc}
\end{eqnarray}
where $\bar u=1-u$, $u\in[0,1]$. {A simple generalization of this Leimkuhler curve can be obtained from the Lorenz curve given in \cite{sarabiaetal_1999} by using \eqref{relation}  (see expression (15) in that paper) and is given by
\begin{eqnarray}
K_X(u;\theta,k)=1-(1-u^k)\bar u^{\theta},\label{gplc}
\end{eqnarray}
where $0<k\leq 1$. Note that when $k=1$ expression \eqref{gplc} reduces to \eqref{plc} as a special case. The corresponding Gini indices to these Leimkuhler curves are given by
\begin{eqnarray}
G(\theta) &=& \frac{\theta}{2+\theta},\label{gip}\\
G(\theta,k) &=& \frac{2\Gamma(k+1)\Gamma(\theta+1)}{\Gamma(\theta+k+2)}+\frac{\theta-1}{\theta+1},\label{gip1}
\end{eqnarray}
for \eqref{plc} and \eqref{gplc}, respectively.}

Now, from \eqref{yi} we get
\begin{eqnarray*}
G_r(\theta)=\frac{r\theta}{1+r+\theta},
\end{eqnarray*}
{for the Leimkuhler curve given in \eqref{plc}. An expression of this index, which is not reproduced here, can be obtained for \eqref{gplc} in terms of the incomplete beta function.}

On the other hand, the Pietra index given in \eqref{pi}, {for \eqref{plc},} is reached at $u=1-(1+\theta)^{-1/\theta}$ and is given by \begin{eqnarray}
P(\theta)=\theta(1+\theta)^{-1/\theta-1}.\label{pip}
\end{eqnarray}

{Numerical computation is needed to get the Pietra index for the Leimkuhler curve given in \eqref{gplc}. These two Leimkuhler curves will be denoted as the power Leimkuhler curve and the generalized power Leimkuher curves, respectively}.$\Box$
\end{example}

{
\begin{example}\normalfont Let $Z$ be a classical Pareto distribution with cumulative distribution function given by $F_Z(z)=1-\left(z/\sigma\right)^{-1/\theta}$, $z\geq \sigma>0$, $0<\theta<1$ and mean $\mu_Z=\sigma(1-\theta)^{-1}$. Now, from \eqref{lc} we get the corresponding Leimkuhler curve given by {\citetext{this model has been considered by \citealp{sarabia_2008a}}}
\begin{eqnarray}
K_X(u;\theta)=u^{1-\theta},\label{plcp}
\end{eqnarray}

The corresponding Gini index to this Leimkuhler curve is given by
\begin{eqnarray}
G(\theta)=\frac{\theta}{2-\theta}.\label{pgip}
\end{eqnarray}

Now, from \eqref{yi} we get
\begin{eqnarray*}
G_r(\theta)=\frac{\Gamma(2+r)\Gamma(2-\theta)}{\Gamma(2+r-\theta)}-1.
\end{eqnarray*}

On the other hand, the Pietra index given in \eqref{pi} is reached at $u=(1+\theta)^{1/\theta}$ and is given by \begin{eqnarray}
P(\theta)=\theta(1+\theta)^{1/\theta-1}.\label{pip1}
\end{eqnarray}

In advance this Leimkuhler curve will be denoted as the Pareto Leimkuhler curve.$\Box$
\end{example}
}

We have considered two data sources to see how these two models work. They correspond to two JCR subject categories in the WOS database: Operations Research \& Management Science (OR \& MS), and Statistic \& Probability (Stat. \& Prob.), and some descriptive statistics of them are shown in Table \ref{tab1}. The total number of "research articles" in 2016{, referred to these subject categories (OR \& MS and Stat. \& Prob.) was 12,767}. In addition, we identified all citations received within a 5-year citation window (2016-2020).

\begin{table}[htbp]
\caption{Descriptive statistics of the {variable number of citations}\label{tab1}}
\begin{center}
\begin{tabular}{lccc}\hline
& \multicolumn{1}{c}{\bf Stat. \& Prob.} && \multicolumn{1}{c}{\bf OR \& MS}\\ \hline
& cites  && cites \\ \hline
{$\min$} & 0 && 0\\
{$\max$} & 908 && 84\\
Mean & 13.81  && 14.08 \\
Variance & 273.90  &&  50.41 \\
ID & 19.83  &&  3.58 \\
{total citations} & \multicolumn{1}{c}{94267} && \multicolumn{1}{c}{83649}\\
$n$ & 6826 && 5941 \\ \hline
\end{tabular}
\end{center}
\end{table}
To estimate the parameters in equations \eqref{plc} and \eqref{plcp}, we employed a non-linear least squares fit. To derive our results, we adopted a methodology that entailed minimizing the sum of squared discrepancies between the empirical data points of the Leimkuhler curve and the corresponding values predicted by the theoretical models. More specifically, if we denote a Leimkuhler curve as $K_X(u;\theta)$, the objective of the analysis can be expressed mathematically as a minimization problem:
\begin{eqnarray*}
\min_{\theta}\sum_{i=1}^{n}\left(\frac{s_i}{s_n}-K_X\left(\frac{i}{n};\theta\right)\right)^2,
\end{eqnarray*}
that is, the sum of the squared discrepancies between the empricial data points of the Leimkuhler curve, $K(u_i)=s_i/s_n$, $u_i=i/n$ and the corresponding values predicted by the theoretical model, $K_X(u_i;\theta)$. Here $\theta$ represents the parameter vector to be estimated. The non-linear least squares method allows for the estimation of non-linear parameters, such as those in equations {\eqref{plc}, \eqref{gplc} and \eqref{plcp}} by iteratively minimizing the sum of the squared differences between the theoretical and empirical curves until convergence is reached.

Table \ref{tab2} shows the estimation results for the { three models considered given in equations \eqref{plc}, \eqref{gplc} and \eqref{plcp}}, respectively. To estimate the standard errors of the coefficients, we used a consistent estimate of the covariance matrix, as described in \cite{amemiya_1985}. We also provide the consistent Akaike's information criteria (CAIC), proposed by
\cite{bozdogan_1987}. The CAIC is expressed as $k(1+\log n)-2\log\ell$. This last measure of model selection was chosen to overcome the tendency of the AIC to
overestimate the complexity of the underlying model since it lacks
certain properties of asymptotic consistency as it does not
directly depend on the sample size. Then, to calculate
the CAIC, a correction factor based on the sample size is used to
compensate for the overestimating nature of AIC. Note that a model with a lower CAIC value is preferred to one with a higher value. To assess the goodness of fit among the models, various measures were employed, including the mean squared error (MSE), the maximum absolute error (MAX), and the mean absolute error (MAE). These measures are formulated as follows:
\begin{eqnarray*}
\mbox{MSE} &=& \frac{1}{n}\sum_{i=1}^{n}\left[\frac{s_i}{s_n}-K_X(u_i;\widehat\theta)\right]^2,\\
\mbox{MAX} &=& \max_{i=1,\dots,n}\left|\frac{s_i}{s_n}-K_X(u_i;\widehat\theta)\right|,\\
\mbox{MAE} &=& \frac{1}{n}\sum_{i=1}^{n}\left|\frac{s_i}{s_n}-K_X(u_i;\widehat\theta)\right|.
\end{eqnarray*}

We also present the Gini and the Pietra indices obtained by using expressions provided in {\eqref{gip}, \eqref{gip1}, \eqref{pgip}, \eqref{pip} and \eqref{pip1}}. We used a code implemented in Wolfram Mathematica (v.13.0) to compute the estimated parameters, as well as the MSE, MAX, MAE, Gini, and Pietra indices. The code is straightforward and easy to use. It is worth noting that a plot of the CDF of the residuals against the CDF of a normal distribution has been represented in all the cases considered, not resulting in a straight line. Therefore, the residuals appear to violate the normality hypothesis. Hence, maximum likelihood estimation seems more appropriate.

In Figure \ref{fig1} the cumulative number of citations, $K_X(u)$, is plotted against the cumulative proportion, $u$, of articles and compared with the estimated Leimkuhler curves given in {\eqref{plc}, \eqref{gplc} and \eqref{plcp} for the sources of data, respectively.} The graphs show that standard models accurately represent the empirical Leimkuhler curves.
\begin{figure}[htbp]
\begin{center}
\includegraphics[page=1,width=0.45\linewidth]{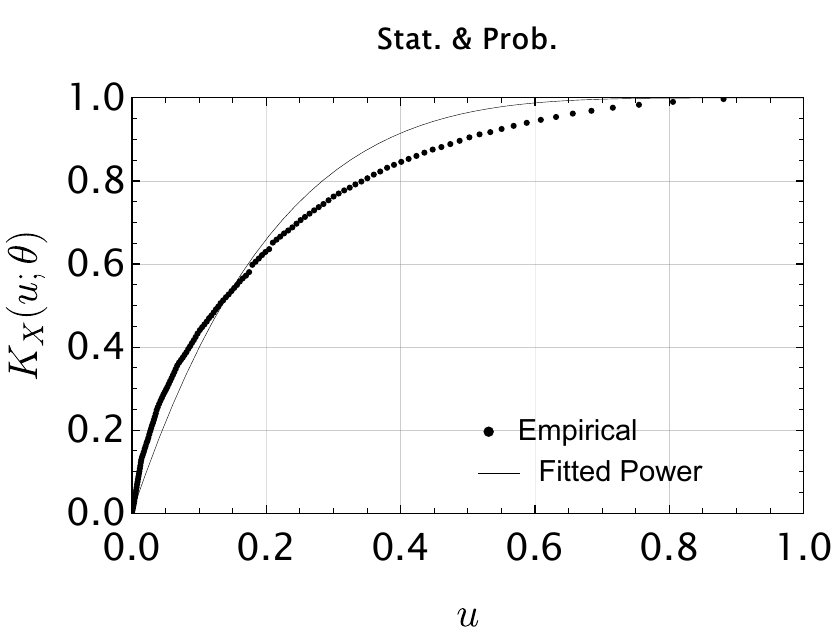}
\includegraphics[page=9,width=0.45\linewidth]{figures.pdf}
\includegraphics[page=2,width=0.45\linewidth]{figures.pdf}
\includegraphics[page=10,width=0.45\linewidth]{figures.pdf}
\includegraphics[page=3,width=0.45\linewidth]{figures.pdf}
\includegraphics[page=11,width=0.45\linewidth]{figures.pdf}
\caption{Empirical and fitted, by non-linear least square, Leimkuhler curves based on the data considered.
{From top to down they correspond to equations given in \eqref{plc}, \eqref{gplc} and \eqref{plcp}}, respectively.}  \label{fig1}
\end{center}
\end{figure}

\section{Beyond the standard model\label{s3}}
Empirical evidence suggests that a simple model of this nature may be too inflexible and insufficient in explaining the phenomenon under study. It is more appropriate to assume that the phenomenon is heterogeneous and that a mixture model may better capture the underlying complexity. Furthermore, the contagious and other missing characteristics of the data suggest that the pattern of citation behavior varies intrinsically across different journals in the subject category.

In practice, over-dispersion, i.e. the variance is greater than the mean, arises due to the heterogeneity of the population, meaning that individuals have constant but unequal probabilities of experiencing an event. In this case, each member of the population has a value of the parameter $\theta\in\Theta$, which follows a probability density function $g(\theta)$. Therefore, $\theta$ can be considered an inhomogeneity parameter that varies from journal to journal, following a probability density function $g(\theta;\Upsilon)$, and the actual Leimkuhler curve can be constructed in this way. Here, $\Upsilon$ represents the hyperparameters of the mixing distribution.

The use of mixture models in informetric and bibliometrics is familiar and has been used to get long-tail citation distributions. See for example \cite{burrellandcane_1982} and \cite{burrell_2005}, among others. It is widely recognized that informetric data poses some specific challenges because they tend to exhibit a heavily skewed distribution.

In short, although the basic model may be exact within a group of journals, the parameter $\theta$ frequently will vary randomly across groups of journals because of random variation in citations and articles considered, for example. Both of these data sources exhibit empirical over-dispersion, meaning that their variance is greater than their mean. Table \ref{tab1} displays the index of dispersion ($\mbox{ID}=var(X)/\E(X)$), as well as the mean and variance of the empirical data for both the citations and articles variables across the two subject categories of journals. Furthermore, there is evidence to suggest that the over-dispersion may be linked to the heterogeneity of the population across subject categories. In such cases, the parameter $\theta$ can be treated as a random variable that assumes different values across various journals within the subject category, thus capturing the inherent uncertainty associated with this parameter, and varying from individual to individual based on a probability density function.

Then, we propose the following Leimkuhler curve instead of the one provided in \eqref{lc}.
\begin{proposition}\normalfont If $K_X(u;\theta)$ is a Leimkuhler curve obtained from \eqref{lc}, then
\begin{eqnarray}
K_X(u;\Upsilon)=\E_{\Theta}[K_X(u;\theta)]=\int_{\Theta}K_X(u;\theta)g(\theta;\Upsilon)\,d\theta\label{mlc}
\end{eqnarray}
is also a proper Leimkuhler curve.
\end{proposition}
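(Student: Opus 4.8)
The plan is to show that the mixed curve defined in \eqref{mlc} satisfies the three defining properties $(i)$--$(iii)$ of a Leimkuhler curve listed above, and then to invoke the characterization of such curves --- equivalently, through the relation \eqref{relation}, the characterization of Lorenz curves given in Theorem 1 of \cite{sarabiaetal_2010} and Theorem 2 of \cite{sordoetal_2014} --- to conclude that any function enjoying these properties is the Leimkuhler curve of some non-negative random variable with finite mean. Before doing so, note that the right-hand side of \eqref{mlc} is well defined and finite: since $0\le K_X(u;\theta)\le 1$ for every admissible $\theta$, the integrand is dominated by the density $g(\cdot;\Upsilon)$, so that $0\le K_X(u;\Upsilon)\le 1$ for all $u\in[0,1]$ and no integrability issue arises.

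The verification is then immediate from the linearity and positivity of the averaging operator $f\mapsto\int_{\Theta}f(\theta)\,g(\theta;\Upsilon)\,d\theta$. For the endpoints, $K_X(0;\theta)=0$ and $K_X(1;\theta)=1$ for every $\theta$, so $K_X(0;\Upsilon)=0$ and $K_X(1;\Upsilon)=\int_{\Theta}g(\theta;\Upsilon)\,d\theta=1$, giving $(i)$. For $(ii)$, if $0\le u_1\le u_2\le 1$ then $K_X(u_1;\theta)\le K_X(u_2;\theta)$ pointwise in $\theta$ by property $(ii)$ for the base curves, and integrating against the nonnegative weight $g(\cdot;\Upsilon)$ preserves the inequality. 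For $(iii)$, fix $u_1,u_2\in[0,1]$ and $\lambda\in[0,1]$; concavity of each base curve gives $K_X(\lambda u_1+(1-\lambda)u_2;\theta)\ge\lambda K_X(u_1;\theta)+(1-\lambda)K_X(u_2;\theta)$, and again integration preserves it, so $K_X(\cdot;\Upsilon)$ is concave. Alternatively, when differentiation under the integral sign is permitted, $(ii)$ and $(iii)$ follow at once from $\partial_u K_X(u;\Upsilon)=\int_{\Theta}\partial_u K_X(u;\theta)\,g(\theta;\Upsilon)\,d\theta\ge 0$ and $\partial_{uu}K_X(u;\Upsilon)=\int_{\Theta}\partial_{uu}K_X(u;\theta)\,g(\theta;\Upsilon)\,d\theta\le 0$.

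There is no substantive obstacle here: the statement amounts to the fact that the family of Leimkuhler (equivalently Lorenz) curves is closed under mixtures, a convexity property of that family. The only point that requires a word of care is the interchange of derivative and integral in the smooth formulation, which needs a domination hypothesis on $\partial_u K_X(\cdot;\theta)$ and $\partial_{uu}K_X(\cdot;\theta)$ uniformly in $\theta$; I would therefore present the proof via the order-and-concavity argument of the previous paragraph, which uses nothing beyond properties $(i)$--$(iii)$ already assumed of each $K_X(\cdot;\theta)$ and so avoids any regularity discussion. If desired, one may instead phrase everything on the Lorenz side: setting $L_X(v;\Upsilon)=\int_{\Theta}L_X(v;\theta)\,g(\theta;\Upsilon)\,d\theta$ and using \eqref{relation} shows $K_X(u;\Upsilon)=1-L_X(1-u;\Upsilon)$, so the claim reduces to the known fact that a mixture of Lorenz curves is again a Lorenz curve.
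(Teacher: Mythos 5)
Your proof is correct and is simply the spelled-out version of what the paper leaves implicit: the authors' entire proof is ``The result is direct,'' and the direct argument is exactly yours --- averaging against the probability density $g(\cdot;\Upsilon)$ preserves the endpoint conditions, monotonicity, and concavity, after which the cited characterization of Lorenz/Leimkuhler curves (via the relation $K_X(u)=1-L_X(1-u)$) yields that the mixture is again a proper Leimkuhler curve. No gap; your care about avoiding differentiation under the integral sign is a sensible refinement rather than a departure from the paper's route.
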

\begin{proof} The result is direct.
\end{proof}

The Gini index in this case can be computed also by mixture as we show in the next result.
\begin{proposition}\normalfont If $K_X(u;\Upsilon)$ is a proper Leimkuhler curve obtained from \eqref{mlc}, then the Gini index is given by
\begin{eqnarray}
G(\Upsilon)=\E_{\Theta}[G(\theta)]=\int_{\Theta}G(\theta)g(\theta;\Upsilon)\,d\theta.\label{gmlc}
\end{eqnarray}
\end{proposition}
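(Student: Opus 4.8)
The plan is to substitute the integral representation \eqref{mlc} of the mixed Leimkuhler curve directly into the definition \eqref{gib} of the Gini index and then interchange the order of integration. Concretely, starting from $G(\Upsilon)=2\int_0^1 K_X(u;\Upsilon)\,du-1$ and writing $K_X(u;\Upsilon)=\int_\Theta K_X(u;\theta)g(\theta;\Upsilon)\,d\theta$, one obtains a double integral over $[0,1]\times\Theta$ of the nonnegative integrand $K_X(u;\theta)g(\theta;\Upsilon)$. Applying Tonelli's theorem to swap the two integrals gives $2\int_\Theta\big(\int_0^1 K_X(u;\theta)\,du\big)g(\theta;\Upsilon)\,d\theta-1$. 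The inner integral is, by \eqref{gib} again, exactly $\tfrac12\big(G(\theta)+1\big)$, so the whole expression collapses to $\int_\Theta G(\theta)g(\theta;\Upsilon)\,d\theta$, which is $\E_\Theta[G(\theta)]$, after using $\int_\Theta g(\theta;\Upsilon)\,d\theta=1$. This is the claimed identity.

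The key steps, in order, are: (i) insert \eqref{mlc} into \eqref{gib}; (ii) justify the Fubini/Tonelli interchange; (iii) recognize the inner $u$-integral as $\tfrac12(G(\theta)+1)$ via \eqref{gib}; (iv) use that $g(\cdot;\Upsilon)$ integrates to one to cancel the constants and arrive at $\E_\Theta[G(\theta)]$. An entirely analogous argument would handle the generalized Gini index $G_r(\Upsilon)$ from \eqref{yi}, since $(1-u)^{r-1}\ge 0$ keeps the integrand nonnegative and the same interchange applies; this is worth remarking since the paper uses $G_r$ as well.

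The main obstacle is purely the measure-theoretic justification in step (ii): one needs the integrand $K_X(u;\theta)g(\theta;\Upsilon)$ to be jointly measurable on $[0,1]\times\Theta$ and the iterated integral to be finite. Finiteness is immediate because $0\le K_X(u;\theta)\le 1$ (property $(i)$–$(ii)$ listed after the Definition, monotonicity with $K_X(1)=1$) and $g$ is a density, so $\int_\Theta\int_0^1 K_X(u;\theta)g(\theta;\Upsilon)\,du\,d\theta\le\int_\Theta g(\theta;\Upsilon)\,d\theta=1<\infty$; Tonelli then applies with no integrability hypothesis beyond nonnegativity and measurability. Everything else is the bookkeeping of constants, so this argument is short; indeed, given that Proposition \eqref{mlc} has already established $K_X(u;\Upsilon)$ is a proper Leimkuhler curve, the result is essentially ``the Gini functional \eqref{gib} is linear in $K_X$,'' and the proof can be stated in a few lines.
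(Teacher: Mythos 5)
Your argument is correct and is precisely the linearity-plus-Fubini argument the paper intends (its own proof is merely the remark ``It is simple''), so you have in effect supplied the details the authors omitted. Nothing further is needed; the observation that $0\leq K_X(u;\theta)\leq 1$ makes the Tonelli interchange immediate is a nice touch.
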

\begin{proof} It is simple.
\end{proof}

\subsection{Specific examples}
{Note that the Power Leimkuhler curve provided in \eqref{gplc} can be rewritten as
\begin{eqnarray}
K_X(u;\theta)=1-(1-u^k)\exp(\theta \log\bar u),\quad 0\leq u < 1.\label{plc1}
\end{eqnarray}}

Thus, attending to \eqref{plc1} any probability
density function with a closed form for its moment generating function, such as the gamma,
the inverse Gaussian, the generalized inverse Gaussian, and Rayleigh distributions should be good candidates
to introduce here.

\subsubsection{The gamma case}
Assume first that the parameter $\theta\in\Theta$ follows a gamma distribution with shaper parameter $\alpha>0$ and rate parameter $\beta>0$, i.e. $g(\theta;\Upsilon) \propto \theta^{\alpha-1}\exp(-\beta \theta)$ where $\theta>0$ and $\Upsilon=(\alpha,\beta)$. It is well-known that its moment generating function is given by
\begin{eqnarray}
M_{\Theta}(t;\Upsilon)=\E[\exp(t\Theta)]=\left(\frac{\beta}{\beta-t}\right)^{\alpha}.\label{mgfg}
\end{eqnarray}

Now, using \eqref{plc}, \eqref{mlc} and \eqref{mgfg} it is straightforward to get the following mixture Leimkuhler curve
{
\begin{eqnarray}
K_X(u;\Upsilon) &=& 1-\bar u \psi_G(u;\Upsilon),\quad 0\leq u < 1.\label{mlcg}\\
K_X(u;\Upsilon,k) &=& 1-(1-u^k)\psi_G(u;\Upsilon),\quad 0\leq u < 1,\label{mlcg1}
\end{eqnarray}
where
\begin{eqnarray*}
\psi_G(u;\Upsilon) = \left(\frac{\beta}{\beta-\log\bar u}\right)^{\alpha}.
\end{eqnarray*}
}

The special case for which $\alpha=1$ derives in the mixture Leimkuhler curve obtained from the exponential mixing distribution.

From \eqref{gmlc} we get that the mixture Gini index results
\begin{eqnarray*}
G(\Upsilon)=\alpha (2\beta)^{\alpha} \Gamma(-\alpha,2\beta)\exp(2\beta),
\end{eqnarray*}
where $\Gamma(a,b)=\int_b^{\infty}t^{a-1}\exp(-t)\,dt$ is the incomplete gamma function.

\begin{eqnarray*}
G_r(\Upsilon)=\alpha r ((1+r)\beta)^{\alpha} \Gamma(-\alpha,(1+r)\beta)\exp[(1+r)\beta].
\end{eqnarray*}

In advance \eqref{mlcg} will be referred as the PG, denoting that is the mixture of a power Leimkuhler curve with the gamma distribution {and \eqref{mlcg1} as GPG}.

\subsubsection{The inverse Gaussian case}
Consider now that $\Theta$ follows an inverse Gaussian distribution with probability density function \citetext{see \citealp{folksandchhikara_1978}; and \citealp{seshadri_1983}} given by
\begin{eqnarray*}
g(\theta;\Upsilon)=\sqrt{\frac{\beta}{2\pi\theta^3}}\exp\left[
-\frac{\beta}{2\alpha^2\theta}(\theta-\alpha)^2\right],\quad \theta>0,
\end{eqnarray*}
where $\alpha>0$ is the mean of the distribution and $\beta>0$. Its variance is given by $\alpha^3/\beta$ and the moment generating function is given by
\begin{eqnarray}
M_{\Theta}(t;\Upsilon)=\E[\exp(t\Theta)]=\exp\left[\frac{\beta}{\alpha}\left(1-\sqrt{1-2\alpha^2 t/\beta}\right)\right].\label{mgfig}
\end{eqnarray}

{
Using now \eqref{plc}, \eqref{mlc} and \eqref{mgfig} we have the mixture Leimkuhler curve given by
\begin{eqnarray}
K_X(u;\Upsilon)=1-\bar u \exp[\psi_{IG}(u;\Upsilon)],\quad 0\leq u < 1,\label{mlcig}\\
K_X(u;\Upsilon,k)=1-(1-u^k) \exp[\psi_{IG}(u;\Upsilon)],\quad 0\leq u < 1,\label{mlcig1}
\end{eqnarray}
where
\begin{eqnarray*}
\psi_{IG}(u;\Upsilon) = \frac{\beta}{\alpha}\left(1-\sqrt{1-2\alpha^2 \log\bar u/\beta}\right).
\end{eqnarray*}
}

We will use the term PIG to denote the Leimkuhler curve given in \eqref{mlcig} because it is the power Leimkuhler curve with the inverse Gaussian distribution {and GPIG for the Leimkuhler curve given in \eqref{mlcig1}}. In this case, numerical computation is required to get the Gini index since non-closed-form expressions exist for \eqref{gmlc}.

{\subsection{The Pareto-confluent hypergeometric distribution case}
The confluent hypergeometric distribution due to \cite{gordy_1998} has a probability density function given by
\begin{eqnarray}
g(\theta;\Upsilon)=\frac{\theta^{\alpha-1}(1-\theta)^{\beta-1}\exp(-k\theta)}{B(\alpha,\beta)\,_1F_1(\alpha;\alpha+\beta;-k)},\quad
0<\theta<1,\label{chd}
\end{eqnarray}
where $\Upsilon=(\alpha,\beta,k)$ with $\alpha>0$, $\beta>0$, $-\infty<k<\infty$ and $_1F_1(\cdot;\cdot;\cdot)$ is the confluent hypergeometric function \citetext{see \citealp[p. 505]{abramowitzandstegun_1968}}, given by
\begin{eqnarray*}
_1F_1(a;b;z)=\frac{\Gamma(b)}{\Gamma(b-a)\Gamma(a)}\int_0^1 t^{a-1}(1-t)^{b-a-1}\exp(z t)\,dt.
\end{eqnarray*}

If $k=0$, then this density reduces to the standard beta density. Now, by using \eqref{plcp}, \eqref{mlc} and \eqref{chd} we get the mixture Leimkuhler curve given by
\begin{eqnarray}
K_X(u;\Upsilon) &=& \int_0^1 u\exp(-\theta\log u)\frac{\theta^{\alpha-1}(1-\theta)^{\beta-1}\exp(-k\theta)}{B(\alpha,\beta)\,_1F_1(\alpha;\alpha+\beta;-k)}\,d\theta
\nonumber\\
&=&
\frac{u\,_1F_1(\alpha;\alpha+\beta;-k-\log u)}{_1F_1(\alpha;\alpha+\beta;-k)}=
\frac{_1F_1(\beta;\alpha+\beta;k+\log u)}{_1F_1(\beta;\alpha+\beta;k)},\;\;0\leq u\leq 1,\label{mlcc}
\end{eqnarray}
where we have used the Kummer transformation \citetext{see \citealp[p. 505]{abramowitzandstegun_1968}} $_1F_1(a;b;-c)=\exp(-c)\,_1F_1(b-a;b;c)$.

In advance \eqref{mlcc} will be referred as the PaGB, denoting that is the mixture of a Pareto Leimkuhler curve with the confluent hypergeometric distribution (generalization of the standard beta distribution).

When $k=0$ the Leimkuhler curve given in \eqref{mlcc} reduces to
\begin{eqnarray*}
K_X(u;\Upsilon) =\,
_1F_1(\beta;\alpha+\beta;\log u),\quad 0\leq u\leq 1,
\end{eqnarray*}
which corresponds to the mixture of the Leimkuhler curve given in \eqref{plcp} with the standard beta density.

}

\subsection{{Stochastic ordering}}
We are examining the Leimkuhler ordering in this study. The Leimkuhler curve can establish an unambiguous ordering when two distribution functions have non-intersecting Leimkuhler curves \citetext{see \citealp{Atkinson1970}, \citealp{Dasgupta1973}, and \citealp{Shorrocks1983}, among others}. The class $\cal L$ is defined as all non-negative random variables with a positive finite expectation, and the Leimkuhler partial order $\leq_L$ is defined on the class $\cal L$ by
\begin{eqnarray*}
X\leq_L Y \Longleftrightarrow F_X(x)\leq F_Y(y) \Longleftrightarrow K_X(u)\leq K_Y(u), \forall u\in[0,1].
\end{eqnarray*}

If $X\leq_L Y$, then $X$ exhibits less inequality than $Y$ in the
Leimkuhler sense. The next result shows that the family of Leimkuhler curves given in \eqref{mlcg} and \eqref{mlcig}
are ordered with respect to the $\Upsilon$ vector of parameters.
\begin{proposition}
{Let $\Upsilon_1=(\alpha_1,\beta_1,k)$, $\Upsilon_2=(\alpha_2,\beta_2,k)$, then the Leimkuhler curves given in \eqref{mlcg} and \eqref{mlcg1} are ordered with respect to $\alpha$ and $\beta$, i.e.
\begin{enumerate}[$(i)$]
\item if $\alpha_1<\alpha_2$ we have that $K_X(u;\Upsilon_1)\geq K_X(u;\Upsilon_2)$, $\forall \beta>0$, $\forall k\in (0,1]$.
\item if $\beta_1<\beta_2$ we have that $K_X(u;\Upsilon_1)\leq K_X(u;\Upsilon_2)$, $\forall \alpha>0$, $\forall k\in (0,1]$.
\end{enumerate}}
\end{proposition}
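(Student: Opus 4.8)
The plan is to reduce the two-parameter comparison to a single monotonicity question by exploiting the shared structure of \eqref{mlcg} and \eqref{mlcg1}. Both curves read $K_X(u;\Upsilon)=1-(1-u^{k})\,\psi_G(u;\Upsilon)$ with $\psi_G(u;\Upsilon)=\left(\tfrac{\beta}{\beta-\log\bar u}\right)^{\alpha}$, the curve \eqref{mlcg} being the case $k=1$. For $u\in[0,1)$ and $k\in(0,1]$ one has $1-u^{k}\ge 0$, so for each fixed $u$ the affine map $x\mapsto 1-(1-u^{k})x$ is orientation-reversing; hence comparing $K_X(u;\Upsilon_1)$ with $K_X(u;\Upsilon_2)$ is equivalent to comparing $\psi_G(u;\Upsilon_1)$ with $\psi_G(u;\Upsilon_2)$ and then reversing the inequality. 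This is the device that lets me treat $(i)$ and $(ii)$ uniformly: it remains only to track how $\psi_G$ responds to a change in $\alpha$ (with $\beta$ fixed) or in $\beta$ (with $\alpha$ fixed), and to feed the result through this reversal.

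For the monotonicity I would work with $\log\psi_G=\alpha\big[\log\beta-\log(\beta-\log\bar u)\big]$ and fix $u\in(0,1)$, so that $\log\bar u<0$ and $c:=\tfrac{\beta}{\beta-\log\bar u}\in(0,1)$. For part $(i)$ the relevant quantity is $\partial_\alpha\log\psi_G=\log c$, whose sign is constant on $(0,1)$ and independent of $u$; for part $(ii)$ it is $\partial_\beta\log\psi_G=\alpha\big(\tfrac1\beta-\tfrac1{\beta-\log\bar u}\big)$, whose sign is fixed by that of $\log\bar u$. In each case the monotonicity of $\psi_G$ in the varied parameter is uniform in $u$, so the orientation-reversing map transports it to a pointwise comparison of the two curves valid for every $u$, and the ordering asserted in $(i)$ and $(ii)$ follows for all admissible $\beta$, $\alpha$, and $k$. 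A more structural alternative I would keep in reserve starts from the base curve \eqref{plc}, $K_X(u;\theta)=1-\bar u^{1+\theta}$, for which $\partial_\theta K_X(u;\theta)=-\bar u^{1+\theta}\log\bar u\ge 0$; since $K_X(u;\Upsilon)=\E_\Theta[K_X(u;\theta)]$ by \eqref{mlc}, a usual-stochastic-order comparison of the two gamma mixing laws transfers monotonically to the mixed curve and reproduces the same ordering, extending verbatim to \eqref{mlcg1} through the nonnegative factor $1-u^{k}$.

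The genuine obstacle here is not the calculus but the bookkeeping of signs. Three separate sign contributions enter---the sign of $\log c$, the sign of the bracket $\tfrac1\beta-\tfrac1{\beta-\log\bar u}$, and the reversal coming from $x\mapsto 1-(1-u^{k})x$---and the asserted directions in $(i)$ and $(ii)$ are exactly the composite of these. Because $\log\bar u<0$ and $c\in(0,1)$ hold simultaneously, it is easy to drop or double-count a reversal, so I would pin the direction down once on an explicit value, say $\bar u=e^{-1}$, before committing to the general statement, and separately confirm consistency at the degenerate endpoints $u=0,1$ (where both curves equal $0$ and $1$) and at the boundary $k=1$, where \eqref{mlcg1} collapses to \eqref{mlcg} and the two parts must remain compatible.
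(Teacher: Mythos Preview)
Your approach is essentially the paper's: compute the partial derivatives of $K_X$ in $\alpha$ and in $\beta$ and read off the monotonicity. The paper differentiates $K_X$ directly; you pass through $\log\psi_G$ and then apply the orientation reversal from $x\mapsto 1-(1-u^{k})x$. These are the same calculation, merely organized differently.

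There is, however, a genuine gap---and the paper shares it. Your own sign analysis gives $\log c<0$ (so $\psi_G$ is \emph{decreasing} in $\alpha$) and $\partial_\beta\log\psi_G=\alpha\bigl(\tfrac1\beta-\tfrac1{\beta-\log\bar u}\bigr)>0$ (so $\psi_G$ is \emph{increasing} in $\beta$). After the orientation reversal this yields $K_X$ \emph{increasing} in $\alpha$ and \emph{decreasing} in $\beta$, which is precisely the \emph{opposite} of what items $(i)$ and $(ii)$ assert. Your sentence ``the asserted directions in $(i)$ and $(ii)$ are exactly the composite of these'' is therefore false, and the sanity check you propose would have caught it: at $\bar u=e^{-1}$, $\beta=1$, $k=1$ one gets $K_X=1-e^{-1}\,2^{-\alpha}$, visibly increasing in $\alpha$, so $\alpha_1<\alpha_2$ gives $K_X(u;\Upsilon_1)\le K_X(u;\Upsilon_2)$, not $\ge$. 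The paper's proof displays exactly the same derivatives with exactly the same signs ($\partial K_X/\partial\alpha>0$, $\partial K_X/\partial\beta<0$) and then writes ``Hence the result,'' even though those signs contradict the stated inequalities. The calculus on both sides is correct; the inequalities in the proposition are stated with the wrong orientation and should be reversed.
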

\begin{proof} By computing the derivative of {\eqref{mlcg} or \eqref{mlcg1}} with respect to $\alpha$ and $\beta$ we get after some simple computations that
\begin{eqnarray*}
\frac{\partial K_X(u;\Upsilon)}{\partial \alpha} &=& -\left[1-K_X(u;\Upsilon)\right]\log\left(\frac{\beta}{\beta-\log\bar u}\right)>0,\\
\frac{\partial K_X(u;\Upsilon)}{\partial \beta} &=& {\frac{\left[1-K_X(u;\Upsilon) \right]\alpha \log \bar u}{\beta(\beta-\log\bar u)}<0,}
\end{eqnarray*}
respectively. Hence the result.
\end{proof}
{
\begin{proposition}
Let $\Upsilon_1=(\alpha_1,\beta_1,k)$, $\Upsilon_2=(\alpha_2,\beta_2,k)$, then the Leimkuhler curves given in \eqref{mlcig} and \eqref{mlcig1} are ordered with respect to $\alpha$ and $\beta$, i.e.
\begin{enumerate}[$(i)$]
\item if $\alpha_1<\alpha_2$ we have that $K_X(u;\Upsilon_1)\geq K_X(u;\Upsilon_2)$, $\forall \beta>0$, $\forall k\in (0,1]$.
\item if $\beta_1<\beta_2$ we have that $K_X(u;\Upsilon_1)\geq K_X(u;\Upsilon_2)$, $\forall \alpha>0$, $\forall k\in (0,1]$.
\end{enumerate}
\end{proposition}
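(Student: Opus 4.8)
The plan is to reproduce, for the inverse Gaussian mixture, the differentiation argument used for the gamma mixture in the preceding proposition. Write $K_X(u;\Upsilon)=1-(1-u^{k})\exp[\psi_{IG}(u;\Upsilon)]$, which covers both \eqref{mlcig} (the case $k=1$) and \eqref{mlcig1} at once, since $k$ enters only through the factor $1-u^{k}>0$. Differentiating in $\alpha$ and in $\beta$, the chain rule gives
\[
\frac{\partial K_X(u;\Upsilon)}{\partial\alpha}=-[1-K_X(u;\Upsilon)]\,\frac{\partial\psi_{IG}}{\partial\alpha},\qquad \frac{\partial K_X(u;\Upsilon)}{\partial\beta}=-[1-K_X(u;\Upsilon)]\,\frac{\partial\psi_{IG}}{\partial\beta},
\]
and since the prefactor $1-K_X(u;\Upsilon)=(1-u^{k})\exp[\psi_{IG}]$ is strictly positive on $[0,1)$, the whole problem reduces to determining the sign of the two partial derivatives of $\psi_{IG}$ and checking that it does not depend on $u$.

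Before that I would record the two elementary facts that make those signs tractable. For $u\in(0,1)$ one has $\log\bar u<0$, so the radicand $1-2\alpha^{2}\log\bar u/\beta$ exceeds $1$; putting $a=-2\alpha^{2}\log\bar u/\beta>0$ and $R=\sqrt{1-2\alpha^{2}\log\bar u/\beta}=\sqrt{1+a}>1$, one has the handy identity $R^{2}=1+a$, together with $\psi_{IG}=(\beta/\alpha)(1-R)<0$, which also re-confirms $0<1-K_X(u;\Upsilon)<1$.

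The computational core is then to differentiate $\psi_{IG}=(\beta/\alpha)(1-R)$ using $\partial R/\partial\alpha=a/(\alpha R)$ and $\partial R/\partial\beta=-a/(2\beta R)$, and to simplify the resulting brackets with $R^{2}=1+a$. For the $\alpha$-derivative the bracket collapses, after clearing the positive factor $R$, to a multiple of $1-R$, whose sign is immediate. For the $\beta$-derivative the bracket, after clearing the positive factor $2R$, becomes a multiple of $2R-2-a$, and its sign follows from the elementary estimate $2\sqrt{1+a}<2+a$ for $a>0$ (i.e.\ $a^{2}>0$ after squaring). In both cases the sign is independent of $u\in[0,1)$, so $\partial K_X/\partial\alpha$ and $\partial K_X/\partial\beta$ keep a constant sign there; integrating in $\alpha$ from $\alpha_{1}$ to $\alpha_{2}$ (resp.\ in $\beta$ from $\beta_{1}$ to $\beta_{2}$) shows that $K_X(u;\Upsilon_{1})$ and $K_X(u;\Upsilon_{2})$ are comparable for every $u$, and the endpoint values $K_X(0)=0$, $K_X(1)=1$ extend this to the closed interval. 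This is exactly the monotonicity in $\alpha$ and in $\beta$, hence the Leimkuhler ordering, asserted in $(i)$ and $(ii)$.

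The step I expect to be the main obstacle is the $\beta$-derivative. Whereas the $\alpha$-bracket reduces to $1-R$ almost mechanically once $R^{2}=1+a$ is used, the $\beta$-bracket $1-R+a/(2R)$ is a difference of two terms of opposite sign and is not visibly signed until one clears the denominator and invokes the concavity estimate $\sqrt{1+a}<1+a/2$. Everything else — the chain-rule reduction, the positivity of $1-K_X$, the $\alpha$-derivative, and the passage from constant-sign derivatives to a non-crossing ordering — is routine, exactly as in the proof of the previous proposition.
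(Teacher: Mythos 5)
Your proof takes essentially the same route as the paper's: differentiate $K_X$ in $\alpha$ and in $\beta$ through $\psi_{IG}$, check that each derivative has a constant sign in $u$, and conclude the non-crossing ordering; moreover your substitution $a=-2\alpha^{2}\log\bar u/\beta$, $R=\sqrt{1+a}$, with the collapse of the $\alpha$-bracket to a multiple of $1-R$ and the estimate $2\sqrt{1+a}<2+a$ for the $\beta$-bracket, actually supplies the sign verification that the paper merely asserts. The one caveat, which you inherit from the paper rather than introduce, is that both computations give $\partial K_X/\partial\alpha>0$ and $\partial K_X/\partial\beta>0$, i.e.\ $K_X$ \emph{increasing} in each parameter, which yields $K_X(u;\Upsilon_1)\leq K_X(u;\Upsilon_2)$ when $\alpha_1<\alpha_2$ (resp.\ $\beta_1<\beta_2$) --- the reverse of the ``$\geq$'' printed in the statement --- so it is the inequality direction in the proposition, not your derivative signs, that is inconsistent.
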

\begin{proof} Again, some computations provide the following derivative of {\eqref{mlcig} or \eqref{mlcig}} with respect to $\alpha$ and $\beta$,
\begin{eqnarray*}
\frac{\partial K_X(u;\Upsilon)}{\partial \alpha} &=& -\left[1-K_X(u;\Upsilon)\right]\frac{\beta}{\alpha^2}\left[-\left(1-\sqrt{1-2\alpha^2 \log\bar u/\beta}\right)+\frac{2\alpha^2\log\bar u/\beta}{\sqrt{1-2\alpha^2 \log\bar u/\beta}}\right]>0,\\
\frac{\partial K_X(u;\Upsilon)}{\partial \beta} &=& -\frac{\left[1-K_X(u;\Upsilon)\right]}{\alpha}\left(1-\sqrt{1-2\alpha^2 \log\bar u/\beta}-\frac{\alpha^2\log\bar u/\beta}{\sqrt{1-2\alpha^2 \log\bar u/\beta}}\right)>0,
\end{eqnarray*}
respectively. Thus, the result follows.
\end{proof}}

{
\begin{proposition}
Let $\Upsilon_1=(\alpha,\beta,k_1)$, $\Upsilon_2=(\alpha,\beta,k_2)$, then the {Leimkuhler curves given in \eqref{mlcg1} and \eqref{mlcig1} are} ordered with respect to $k$, i.e.
\begin{enumerate}[$(i)$]
\item if $k_1<k_2$ we have that $K_X(u;\Upsilon_1)\leq K_X(u;\Upsilon_2)$, $\forall \alpha,\beta>0$.
\end{enumerate}
\end{proposition}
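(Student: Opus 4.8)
The plan is to handle \eqref{mlcg1} and \eqref{mlcig1} simultaneously by noting that both are of the shape $K_X(u;\Upsilon)=1-(1-u^{k})\,\phi(u;\alpha,\beta)$, where $\phi=\psi_G$ in the gamma case and $\phi=\exp[\psi_{IG}]$ in the inverse Gaussian case. The crucial features are that $\phi$ does not involve $k$ and that $\phi(u;\alpha,\beta)>0$ for every $u\in(0,1)$: in the gamma case $\psi_G(u;\Upsilon)=(\beta/(\beta-\log\bar u))^{\alpha}$ lies in $(0,1]$ because $\log\bar u\le 0$ forces the base into $(0,1]$, and in the inverse Gaussian case $1-2\alpha^{2}\log\bar u/\beta\ge 1$, so $\psi_{IG}$ is real and $\exp[\psi_{IG}]>0$ automatically. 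Hence the entire dependence of either curve on $k$ is carried by the single factor $1-u^{k}$.

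I would then simply take the difference. Since $\Upsilon_1$ and $\Upsilon_2$ differ only in their third coordinate,
\[
K_X(u;\Upsilon_2)-K_X(u;\Upsilon_1)=\phi(u;\alpha,\beta)\,\bigl(u^{k_2}-u^{k_1}\bigr),\qquad u\in[0,1],
\]
or, equivalently, one differentiates to get $\partial K_X/\partial k=\phi(u;\alpha,\beta)\,u^{k}\log u$. For fixed $u\in(0,1)$ the map $k\mapsto u^{k}$ is strictly monotone, while at the two endpoints $u=0$ and $u=1$ both curves take the common values $0$ and $1$ respectively — the apparent blow-up of $\phi$ as $u\to 1$ being cancelled by $1-u^{k}\to 0$. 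So the sign of $K_X(u;\Upsilon_2)-K_X(u;\Upsilon_1)$ is constant on $[0,1]$ and is governed by $\log u<0$; this yields the pointwise comparison of the two Leimkuhler curves and hence the $\leq_L$-ordering asserted in the proposition, via the definition of $\leq_L$ recalled above.

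This is essentially the same device used in the two preceding propositions — compute the derivative (or difference) in the parameter, read off its sign, and invoke $\leq_L$ — so I do not expect a genuine obstacle. The only point that calls for a line of justification is the positivity of the $k$-free prefactor $\phi$ in each of the two models, together with the continuous extension of $(1-u^{k})\phi(u;\alpha,\beta)$ to the endpoints $u\in\{0,1\}$; once those are in place the remainder is an elementary calculus argument on $u^{k}$.
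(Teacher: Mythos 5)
Your decomposition and your algebra are correct: both \eqref{mlcg1} and \eqref{mlcig1} have the form $K_X(u;\Upsilon)=1-(1-u^{k})\,\phi(u;\alpha,\beta)$ with $\phi>0$ independent of $k$, and the difference is exactly
\[
K_X(u;\Upsilon_2)-K_X(u;\Upsilon_1)=\phi(u;\alpha,\beta)\bigl(u^{k_2}-u^{k_1}\bigr),
\]
equivalently $\partial K_X/\partial k=\phi(u;\alpha,\beta)\,u^{k}\log u$. The problem is your last step, where you read off the conclusion. For $u\in(0,1)$ and $k_1<k_2$ we have $u^{k_2}<u^{k_1}$ (equivalently, $\log u<0$ makes the derivative negative), so your own formula gives $K_X(u;\Upsilon_2)\le K_X(u;\Upsilon_1)$ --- the \emph{reverse} of the inequality asserted in the proposition. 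You cannot conclude ``the ordering asserted in the proposition'' from a quantity you have just shown to be nonpositive: the map $k\mapsto K_X(u;\Upsilon)$ is decreasing in $k$, not increasing.

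The discrepancy originates in the paper, not in your computation. The paper's printed derivative, $\partial K_X/\partial k=[1-K_X(u;\Upsilon)]\,k u^{k-1}/(1-u^{k})=\phi\,k u^{k-1}>0$, implicitly uses $\frac{\partial}{\partial k}u^{k}=k u^{k-1}$, i.e.\ it differentiates $u^{k}$ as a power of $u$ rather than as an exponential in $k$; the correct derivative is $u^{k}\log u$, which is negative on $(0,1)$. So item $(i)$ actually holds with the inequality reversed ($K_X(u;\Upsilon_1)\ge K_X(u;\Upsilon_2)$ when $k_1<k_2$), and your argument, carried to its honest conclusion, proves that corrected statement. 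Everything else in your write-up --- the positivity of the $k$-free prefactor $\phi$ in both models and the endpoint behaviour at $u=0$ and $u=1$ --- is fine and matches the paper's method of differentiating in the parameter and invoking $\leq_L$.
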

\begin{proof} Again, some computations provide the following derivative of {\eqref{mlcg1} or \eqref{mlcig1}} with respect to $k$,
\begin{eqnarray*}
\frac{\partial K_X(u;\Upsilon)}{\partial k} = \frac{\left[1-K_X(u;\Upsilon_1)\right]k u^{k-1}}{1-u^k}>0.
\end{eqnarray*}

Hence, the result follows.
\end{proof}
}

\section{Calculations\label{s4}}

{Table \ref{tab2} displays the estimation results for the models considered here. The computational time needed for obtaining parameter estimates and model comparison measures is minimal compared to standard models.}

As evidenced by the tables, the novel models exhibit superior performance with in general smaller values of MSE, MAX and MAE. Furthermore, Gini and Pietra indices are nearer to the empirical values in comparison with the initial Leimkuhler curves (0.6112 and 0.4625 for Stat. \& Prob. data, and 0.5166 and 0.3803 for OR \& MS data, respectively). Notably, the new proposed functional forms yield the best fit in this comparison, with the exception of the Pareto and Pareto mixture model which work worst than the rest of the models.

\begin{sidewaystable}
\caption{Estimated parameters (standard error between parenthesis), using non-linear least square, for the different Leimkuhler curves considered and some statistics.\label{tab2}}

\begin{center}

\hspace*{-1cm}
\begin{tabular}{llrrrrrrrrrr}\hline
& Model &  \multicolumn{4}{c}{Estimated parameters} & MSE & MAX & MAE & CAIC &  Gini & Pietra\\
\cline{3-6}
&&&&&&&&&\\
 & & $\widehat\theta$ & $\widehat k$ &$\widehat\alpha$ & $\widehat\beta$  & & & &  \\ \hline

\hline

{\bf Stat. \& Prob.} &
  Power  & 3.832 & & & & 0.0027 & 0.0835 & 0.0471 &  -20778.40 &   0.6571  & 0.5276 \\
 & &  (0.0131) &&&&&&&\\
 & PG & &  & 0.701 & 0.102 & 0.0003 & 0.0318 & 0.0149  &  -36227.40 &   0.5910  & 0.4574\\
 &   & & & (0.0275) & (0.0058) &&&&&\\

 & PIG & &  & 9.305 & 2.227 & $7.7\cdot 10^{-5}$ & 0.0169 & 0.0076  &  -45277.80 &   0.6011  & 0.4536\\
 &   & & & (0.0275) & (0.0058) &&&&&\\

 & {   GP}  & 1.859 & 0.514 & & &$1.3\cdot 10^{-4}$ & 0.0319 & 0.0089 & -41525.70  &  0.6204 & 0.4636\\
 & &  (0.0042) & (0.0005) &&&&&&\\
 & {GPG} & & 0.554 & 1.514 & 0.596 & $6.7\cdot 10^{-5}$ & 0.0236 & 0.0060 &  -46149.90  &  0.6071 & 0.4608\\
 & & & (0.0007)   & (0.0184) & (0.0096) &&&&&\\

 & {GPIG} & & 0.571 & 2.919 & 2.416 & $5.9\cdot 10^{-5}$ & 0.0209 & 0.0061 & -46983.00 & 0.6070 & 0.4591\\
 & & & (0.0010) &  (0.0182) & (0.0230) & & & & & \\
& {Pa} & 0.645 &  &  &  & 0.0067 & 0.1244 & 0.0730 & -14744.30 & 0.4756 & 0.3373\\
 & & (0.0011) &&  &  & & & & & \\
 & {PaGB} & & -25.483 & 11.042 & 14.080 & 0.0078 & 0.4231 & 0.0797 & -15703.10 & 0.4984 & 0.3840\\
 & & & (0.0360) & (0.1543) & (0.1381) & & & & & \\ \hline
{\bf OR \& MS} &
  Power  & 2.767 & & & & 0.0044 & 0.0968 & 0.0601 &  -15361.20 &   0.5804 & 0.4548\\
 & & (0.0140) &&&&&&&\\
 & PG &  & & 0.392 & 0.055 & $1.5\cdot 10^{-5}$ & 0.0261 & 0.0108 &  -35415.90  &  0.5021 & 0.3838\\
 & &  & & (0.0013) & (0.0003) &&&&&\\

 & PIG &  & & 14.035 & 1.029 & $9.2\cdot 10^{-6}$ & 0.0096 & 0.0024 &  -52011.20  &  0.5188 & 0.3812\\
 & & & & (0.033) & (0.0008) & & & & & \\

& {   GP}  & 0.928 & 0.492 & & & 0.0002 & 0.0324 & 0.0110 & -34430.30  & 0.5282 & 0.3803\\
 & & (0.004) & (0.0005) &&&&&&\\
 & {GPG} & & 0.558 & 0.445 & 0.212 & $4.1\cdot 10^{-5}$ & 0.0170 & 0.0052 & -43094.90 &  0.5115 & 0.3849\\
 & & & (0.0008) & (0.0028) & (0.0026) &&&&&\\

 & {GPIG} & & 0.799 & 10.765 & 0.742 & $7.5\cdot 10^{-6}$ & 0.0088 & 0.0019 & -53166.10 & 0.5165 & 0.3813\\
 & & & (0.0039) & (0.1181) & (0.0056) & & & & & \\

 & {Pa} & 0.606 &  &  &  & 0.0028 & 0.0783 & 0.0478 & -18027.20 & 0.4344 & 0.3305\\
 &  & (0.0009) &  &  &  & & & & & \\
 & {PaGB} & & -28.083 & 45.010 & 41.080 & 0.0029 & 0.473 & 0.0489 & -18506.80 & 0.4307 & 0.3276\\
 & & & (0.096) & (1.967) & (1.369) & & & & & \\
\hline

\multicolumn{10}{l}{\scriptsize PG: Mixture power-gamma Leimkuhler curve;  PIG: Mixture power-inverse Gaussian Leimkuhler curve}\\
\multicolumn{10}{l}{\scriptsize GPG: Mixture generalized power-gamma Leimkuhler curve; GPIG: Mixture generalized power-inverse Gaussian Leimkuhler curve}\\
\multicolumn{10}{l}{\scriptsize PaGB: Mixture Pareto-generalized beta Leimkuhler curve}
\end{tabular}
\end{center}
\end{sidewaystable}

Furthermore, we have provided a visual representation of the performance of each model in Figure \ref{fig2}. As expected, the mixture models fit the empirical Leimkuhler curve reasonably well and outperform the Power Leimkuhler models shown in Figure \ref{fig1}. In particular, the mixture based on the inverse Gaussian distribution is better.
\begin{figure}[htbp]
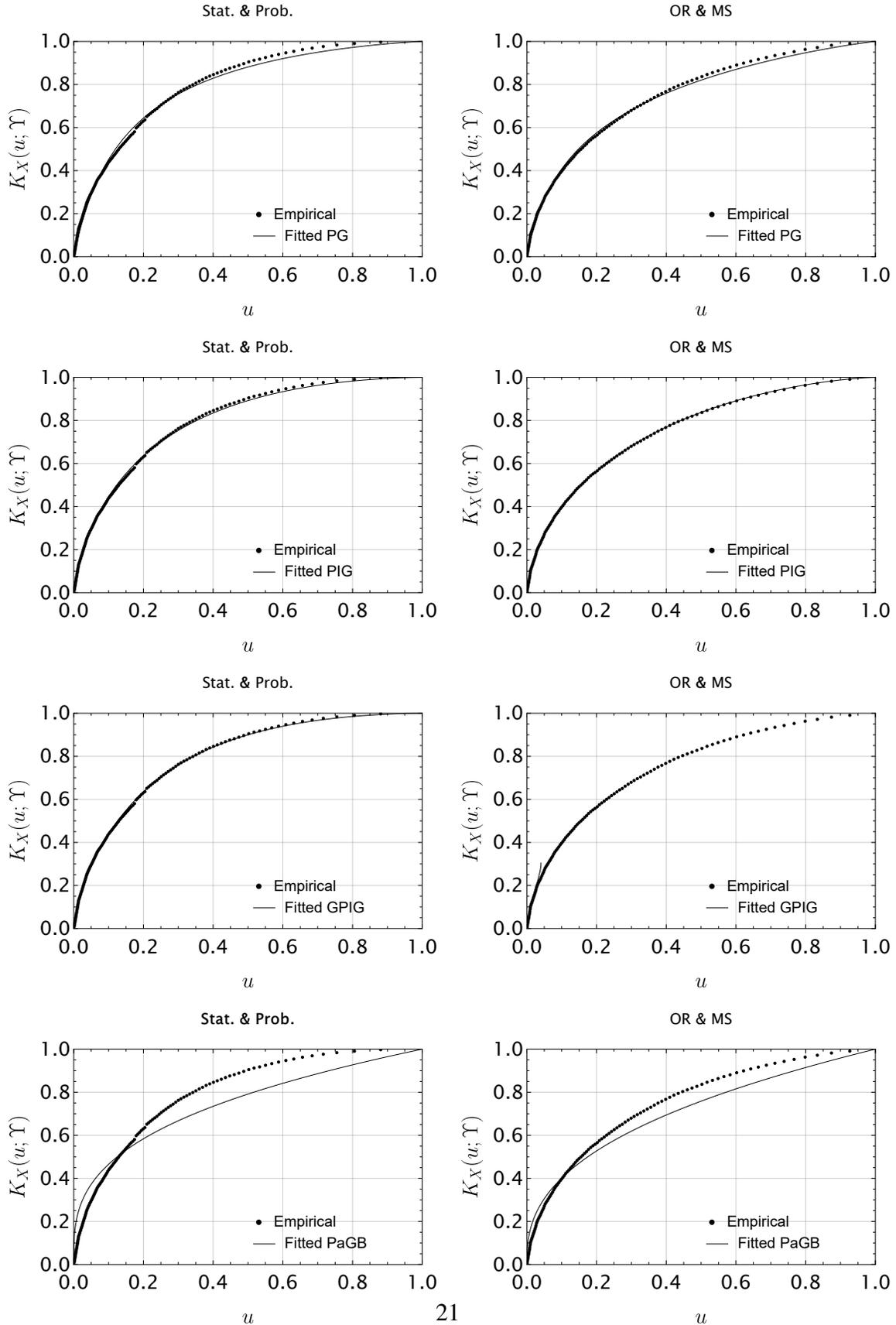

\begin{center}
\includegraphics[page=4,width=0.45\linewidth]{figures.pdf}\hspace{0.25cm}
\includegraphics[page=12,width=0.45\linewidth]{figures.pdf}\\
\includegraphics[page=5,width=0.45\linewidth]{figures.pdf}\hspace{0.25cm}
\includegraphics[page=13,width=0.45\linewidth]{figures.pdf}\\
\includegraphics[page=7,width=0.45\linewidth]{figures.pdf}\hspace{0.25cm}
\includegraphics[page=15,width=0.45\linewidth]{figures.pdf}\\
\includegraphics[page=8,width=0.45\linewidth]{figures.pdf}\hspace{0.25cm}
\includegraphics[page=16,width=0.45\linewidth]{figures.pdf}

\caption{Empirical and fitted, by non-linear least square, Leimkuhler curves based on the data considered and mixture models. The left and right sides correspond to the mixture power-gamma (PG) and mixture power-inverse Gaussian (PIG) Leimkuhler curves, given in equations \eqref{mlcg} and \eqref{mlcig}, respectively.}  \label{fig2}
\end{center}
\end{figure}

We found higher citation concentration indices, as measured by the Gini and Pietra indices, in the case of Stat. \& Prob. Several factors may explain these differences, as discussed below.

Citation frequency and volume: Research in OR \& MS often involves the application of mathematical models, optimization techniques, and decision frameworks to practical problems. This often leads to more frequent citations as practitioners and researchers refer to specific methods and solution approaches. In contrast, citations in the field of Stat. \& Prob. may be less frequent, especially for well-established theories and fundamental concepts. However, when new methodologies, algorithms, or data analysis techniques are introduced, citations may become more substantial \citetext{\citealp{adleretal_2009}; \citealp{jietal_2022}}.

Citation purposes: Citations in OR \& MS are often used to demonstrate the effectiveness and feasibility of particular optimization techniques, decision models, or algorithms. These citations often support the implementation of solutions in real-world scenarios. On the other hand, citations in the area of Stat. \& Prob. often serve to provide the theoretical foundation for data analysis methods, probability models, and statistical inference. They form the basis for rigorous methodologies and may be cited to rationalize statistical choices made in research (\citealp{adleretal_2009}).

Interdisciplinarity: While both fields have interdisciplinary connections, OR \& MS, with its focus on problem solving and decision making across sectors, may attract citations from diverse fields such as engineering, business, health care, and logistics. Similarly, citations within Stat. \& Prob. may have a stronger presence in disciplines related to data science, biology, economics, social sciences, and physical sciences, where statistical methods play a central role in analysis (\citealp{jietal_2022}).

Data sources: OR \& MS often relies on real-world data from operational and management contexts, such as supply chain or production data. In contrast, Stat. \& Prob. employs a wide range of data sources, including experimental data, surveys, and observations (\citealp{devore_2015}).

It's important to recognize that these distinctions are generalizations, and there may be variation among individual researchers and specific research topics within each subject category.

\section{Conclusions\label{s5}}

The Leimkuhler curve has become a widely used tool in informetrics and other fields for analyzing the concentration and inequality of resources and productivity. This paper proposed a new method for parameterizing the citation distribution using a mixture of Leimkuhler curves, which provides a more flexible and accurate framework for modeling citation distributions. We also considered measures of citation concentration across different fields, such as the Gini and Pietra indexes, to capture the differences in citation patterns across various research fields. Our results demonstrate the effectiveness of our proposed method and highlight the importance of considering measures of concentration and inequality in bibliometric analysis and other fields.

The distribution of citations can be characterized by a combination of Leimkuhler curves, with citation concentration measurements providing insights into how citations are concentrated among a select group of highly cited journals and papers. By employing this proposed method, a more comprehensive understanding of citation patterns can be achieved, leading to the effective identification of a core group of leading journals or papers. Empirical results demonstrate the method's efficacy in capturing differences in citation patterns across various fields, as evidenced by goodness of fit and predictive accuracy analyses. These findings hold significant implications for comprehending the dynamics of scientific knowledge productivity and dissemination, thereby making a valuable contribution to the field of bibliometrics.

As a limitation, we used two field examples to test the method. However, we consider that the main contribution of the paper lies in the theoretical and mathematical results. Specifically, in relation to the introduction of mixture distributions (called PG and PIG) to bibliometrics, and the obtaining of closed-form expressions for Leimkuhler curves.

We conducted an analysis of citations within the subject categories Operations Research \& Management Science (OR \& MS) and Statistics \& Probability (Stat. \& Prob.). Our calculations shed light on these categories in the following ways. We found higher citation concentration indices, as measured by the Gini and Pietra indices, in the case of Stat. \& Prob. Factors that may explain these differences include citation frequency and volume, citation purpose, interdisciplinarity, and data sources.

The results obtained from the use of Leimkuhler curves have significant applications in the field of library and information science. These curves provide valuable insight into trends and patterns in the development of collections and research areas, and allow for the visualization of the distribution of information sources over time (\citealp{gupta_2022}). When combined with bibliometric techniques, these curves can help identify influential authors or publications within a particular subject area, facilitate the building of comprehensive collections, and identify potential research collaborators or partners.

Leimkuhler curves serve as valuable tools for examining the evolution of a particular field or subject area over time. By comparing curves from different time periods, researchers can identify trends and changes in the distribution of information sources, providing crucial insights into the dynamic nature of research fields (\citealp{borgohain_2021}). Moreover, these curves are useful for evaluating the adequacy and balance of a library's collection over time. By analyzing the curves, librarians can pinpoint areas where adjustments in source acquisition may be necessary to ensure that the collection remains current and relevant. Furthermore, these curves hold relevance in the context of digital libraries, as they enable the study of usage patterns of electronic resources, which can inform collection development decisions and resource allocation strategies.

Finally, the use of gamma and inverse Gaussian distributions provides a lot of flexibility to the new models obtained, but beyond the underlying mathematical convenience, there is not theoretical justification for their use. This constitutes a weakness or limitation that must be underlined.

\end{document}